\newcommand\vldbdoi{XX.XX/XXX.XX}
\newcommand\vldbpages{XXX-XXX}
\newcommand\vldbvolume{14}
\newcommand\vldbissue{1}
\newcommand\vldbyear{2020}
\newcommand\vldbauthors{\authors}
\newcommand\vldbtitle{\shorttitle} 
\newcommand\vldbavailabilityurl{URL_TO_YOUR_ARTIFACTS}
\newcommand\vldbpagestyle{plain} 
\newtheorem{myDef}{\noindent \textbf{Definition}}
\newtheorem{myLemma}{\noindent \textbf{Lemma}}
\begin{document}
\title{Efficient Data-aware Distance Comparison Operations for High-Dimensional Approximate Nearest Neighbor Search}



\author{Liwei Deng, Penghao Chen, Ximu Zeng}
\affiliation{%
  \institution{University of Electronic Science and Technology of China}
}
\email{deng_liwei, penghaochen, ximuzeng@std.uestc.edu.cn}

\author{Tianfu Wang}
\affiliation{%
  \institution{University of Science and Technology of China}
}
\email{tianfuwang@mail.ustc.edu.cn}

\author{Yan Zhao}
\affiliation{%
  \institution{University of Electronic Science and Technology of China\Letter}
}
\email{yanz@cs.aau.dk}

\author{Kai Zheng}
\affiliation{%
  \institution{University of Electronic Science and Technology of China\Letter}
}
\email{zhengkai@uestc.edu.cn}
\authornote{Liwei Deng and Penghao Chen are equally contributed to this work. \Letter \ Kai Zheng is with the School of Computer Science and Engineering, University of Electronic Science and Technology of China. Yan Zhao and Kai Zheng are with Shenzhen Institute for Advanced Study, University of Electronic Science and Technology of China.}

\begin{abstract}
High-dimensional approximate $K$ nearest neighbor search (AKNN) is a fundamental task for various applications, including information retrieval.
Most existing algorithms for AKNN can be decomposed into two main components, i.e., candidate generation and distance comparison operations (DCOs). While different methods have unique ways of generating candidates, they all share the same DCO process. 
In this study, we focus on accelerating the process of DCOs that dominates the time cost in most existing AKNN algorithms.
To achieve this, we propose an \underline{D}ata-\underline{A}ware \underline{D}istance \underline{E}stimation approach, called \emph{DADE}, 
which approximates the \emph{exact} distance in a lower-dimensional space.
We theoretically prove that the distance estimation in \emph{DADE} is \emph{unbiased} in terms of data distribution. Furthermore, we propose an optimized estimation based on the unbiased distance estimation formulation.
In addition, we propose a hypothesis testing approach to adaptively determine the number of dimensions needed to estimate the \emph{exact} distance with sufficient confidence. 
We integrate \emph{DADE} into widely-used AKNN search algorithms, e.g., \emph{IVF} and \emph{HNSW}, and conduct extensive experiments to demonstrate the superiority.
\end{abstract}

\maketitle

\vspace{-0.35cm}
\pagestyle{\vldbpagestyle}
\begingroup\small\noindent\raggedright\textbf{PVLDB Reference Format:}\\
\vldbauthors. \vldbtitle. PVLDB, \vldbvolume(\vldbissue): \vldbpages, \vldbyear.
\endgroup
\begingroup
\renewcommand\thefootnote{}\footnote{\noindent
This work is licensed under the Creative Commons BY-NC-ND 4.0 International License. Visit \url{https://creativecommons.org/licenses/by-nc-nd/4.0/} to view a copy of this license. For any use beyond those covered by this license, obtain permission by emailing \href{mailto:info@vldb.org}{info@vldb.org}. Copyright is held by the owner/author(s). Publication rights licensed to the VLDB Endowment. \\
\raggedright Proceedings of the VLDB Endowment, Vol. \vldbvolume, No. \vldbissue\ %
ISSN 2150-8097. \\
\href{https://doi.org/\vldbdoi}{doi:\vldbdoi} \\
}\addtocounter{footnote}{-1}\endgroup

\vspace{-0.35cm}
\ifdefempty{\vldbavailabilityurl}{}{
\vspace{.3cm}
\begingroup\small\noindent\raggedright\textbf{PVLDB Artifact Availability:}\\
The source code, data, and/or other artifacts have been made available at \url{https://github.com/Ur-Eine/DADE}.
\endgroup
}

\vspace{-0.3cm}
\section{Introduction}

Searching for the K nearest neighbors (KNN) in the high-dimensional Euclidean space is pivotal for various applications, such as data mining~\cite{cover1967nearest,wang2019mcne,deng2022efficient,deng2024learning}, information retrieval~\cite{grbovic2018real,huang2020embedding}, scientific computing, recommendation systems~\cite{huang2020embedding,okura2017embedding,lian2020geography}, and large language models~\cite{chen2024large,asai2023tutorial,li2023skillgpt}. However, with the increase of dimensions, the performance of existing indexing algorithms such as R-Tree~\cite{guttman1984r,beckmann1990r} and KD-Tree~\cite{bentley1975multidimensional,wald2006building} for exact searching degrades to that of brute-force search, which is time-consuming and 
often fails to meet the practical application requirements. This phenomenon is known as the curse of dimensionality. Thus, to achieve the tradeoff between accuracy and latency, existing research mainly focuses on developing algorithms for approximate K nearest neighbors (AKNN) search, which aim to provide neighbors with acceptable recall 
while significantly improving search speed.

Existing studies for AKNN can be broadly categorized into four main approaches: (1) graph-based~\cite{zhao2023towards,malkov2018efficient,fu12fast,iwasaki2018optimization,munoz2019hierarchical,chen2023finger,wang2024starling}, (2) quantization-based~\cite{jegou2010product,ge2013optimized,babenko2014inverted,gao2024rabitq,kalantidis2014locally,liu2017pqbf}, (3) tree-based~\cite{beygelzimer2006cover,ciaccia1997m,dasgupta2008random,muja2014scalable}, and (4) hashing-based~\cite{datar2004locality,gan2012locality,zheng2020pm,tian2023db}. Despite 
their different methodologies, 
these algorithms generally follow a common paradigm: 
they first generate 
candidate neighbors 
and then refine 
these candidates to identify the K nearest neighbors. It should be noted that these algorithms share a similar process in the refinement process, which involves
maintaining a max-heap $\mathcal{Q}$ to store the current KNNs. 
For each new candidate, they check whether its distance to the query is less than the maximum in $\mathcal{Q}$~\cite{gao2023high}. If the distance is not less than the maximum, the candidate is discarded. Otherwise, the max-heap $\mathcal{Q}$ is updated to remove the object with the maximum distance and add the new candidate. This checking process is called \emph{distance comparison operation} (DCO)~\cite{gao2023high}. 
{\color{black}{DCO is widely adopted in AKNN approaches. For example in HNSW and IVF, they first compute the \emph{exact} distance between the query and a candidate, and then compare this distance to the maximum in Q.}}
The 
computation in the full-dimensional space has a time complexity of $O(D)$, where $D$ is the number of dimensions in the Euclidean space. 
{\color{black}{Gao et al.~\cite{gao2023high} demonstrate that the cost of performing DCOs dominates the total query time of \emph{HNSW}.}}
For example, on the DEEP dataset with 256 dimensions, DCOs take 77.2\% of the total running time.

An intuitive idea to speed up the process of DCOs is to determine the distance between query and candidate without the calculation of the exact distance. This goal can be achieved through some distance approximation methods such as product quantization~\cite{jegou2010product,ge2013optimized}. 
{\color{black}{However, despite these methods can enhance the efficiency of distance computation, the accuracy will degrade dramatically. }}
Thus, these methods are more suitable for generating candidates rather than obtaining KNN from the generated candidates~\cite{wang2017survey}. Recently, Gao et al.~\cite{gao2023high} systematically study solutions for DCOs and propose a method called $ADSampling$, which firstly applies a \emph{random orthogonal transformation}~\cite{choromanski2017unreasonable} to the original vectors and then adaptively determines the number of dimensions to be sampled for each object during the query phase based on the DCO for distance estimation. They provide a theoretical proof to demonstrate that the distance estimation of $ADSampling$ is \emph{unbiased} in terms of the random transformation and that 
the failure probability is bounded by a constant.
However, $ADSampling$ is data-oblivious and cannot provide an accurate distance estimation for a specific dataset, which hinders it from being optimal.


In this study, we propose a new method called \emph{DADE} for \underline{D}ata-\underline{A}ware \underline{D}istance estimation~\footnote{{\color{black}{By distances, we refer to the Euclidean distance without further specification.}}}. Specifically, \emph{DADE} utilizes an \emph{orthogonal transformation} to rotate the original space and performs DCOs in the projected space. 
{\color{black}{The number of dimensions $d$ ($d\le D$) is adaptively determined to estimate the distance between the query and individual candidates.}}
Different from previous distance estimation methods that project all objects with \emph{equal} dimensions, \emph{DADE} is more similar to $ADSampling$ 
in that it employs a different number of dimensions for DCOs on different objects.
However, \emph{DADE} improves upon $ADSampling$ by deriving the orthogonal transformation based on data distribution rather than \emph{randomly}, leading to more accurate distance estimation within the same running time.
We 
provide theoretical proofs for two key points: (1) our distance estimation is \emph{unbiased} in terms of data distribution 
regardless of the number of dimensions, and (2) 
{\color{black}{compared to other \emph{unbiased} distance estimation methods, our approach is optimized in terms of variance. These proofs demonstrate that \emph{DADE} is a better distance estimation method when the transformation is \emph{orthogonal}.}}
Furthermore, to determine when to expand the number of dimensions, a hypothesis testing approach is adopted, in which 
the significance level is controlled by a probability defined and empirically derived from the data objects.







In summary, we conclude our contributions as follows.

\begin{itemize}[leftmargin=*]
\item We propose a new method \emph{DADE}, which can be 
integrated as a plug-in component to 
accelerate the search process in existing AKNN algorithms such as \emph{IVF} and \emph{HNSW}.

\item {\color{black}{We 
provide a theoretical proof showing that the proposed distance estimation is \emph{unbiased} and \emph{optimized} in terms of data distribution when the transformation is \emph{orthogonal}. }}

\item We propose a hypothesis testing method to \emph{adaptively} control the number of dimensions used in distance estimation. The probability of estimation deviation is empirically approximated from the data objects, addressing the challenge of explicitly expressing the data distribution. 

\item We conduct extensive experiments on real datasets to show the superiority of our method. For example, on the DEEP dataset, \emph{DADE} improves the queries per second (QPS) by over $40\%$ on \emph{HNSW} compared with the state-of-the-art approach $ADSampling$, while maintaining the same level of accuracy. 


\end{itemize}

\section{Preliminaries}
We proceed to present the necessary preliminaries and then define the problem addressed. 


\begin{myLemma}
\label{lemma:zeromean}
{\color{black}{Assume $X_1,X_2 \in \mathbb{R}^D$ are independent and identically distributed random vectors, and $Y_i=X_i-\mathbb{E}[X_i]\ (for\ i=1,2)$. For $W_D \in \mathbb{R}^{D\times D}, {W_D}^T W_D=\mathbf{I}$, the following equation holds:}}
\begin{equation}
\small
\label{eq:eq1}
\begin{aligned}
\|{W_D}^T Y_1 - {W_D}^T Y_2\|_2^2 & = \|{W_D}^T X_1 - {W_D}^T X_2\|_2^2
\end{aligned}
\end{equation}
\end{myLemma}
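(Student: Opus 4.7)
The plan is short because the identity essentially collapses once one uses the i.i.d.\ assumption. My first step is to invoke identical distribution to write a common mean $\mu := \mathbb{E}[X_1] = \mathbb{E}[X_2]$, so that $Y_i = X_i - \mu$ for both $i=1,2$. Then I would compute the difference of the centered vectors directly:
$$Y_1 - Y_2 = (X_1 - \mu) - (X_2 - \mu) = X_1 - X_2,$$
so the mean cancels out of the pairwise difference.

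Next I would apply $W_D^T$ to both sides using linearity, obtaining
$$W_D^T Y_1 - W_D^T Y_2 \;=\; W_D^T(Y_1 - Y_2) \;=\; W_D^T(X_1 - X_2) \;=\; W_D^T X_1 - W_D^T X_2.$$
Taking squared Euclidean norms on both sides yields Equation~\eqref{eq:eq1}. No appeal to the orthogonality hypothesis $W_D^T W_D = \mathbf{I}$ is required for this particular equality, nor is independence of $X_1, X_2$; only the matching of their means is used. These extra hypotheses are presumably stated because the same $W_D$ plays the role of an orthogonal data-aware transform in the rest of the paper, and because downstream variance/unbiasedness arguments will rely on independence.

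There is no real obstacle to overcome here. The only subtlety worth flagging in the write-up is that the lemma is not vacuous: it justifies, in one line, that every subsequent DCO analysis may be carried out on the centered vectors $Y_i$ in the projected space without changing the squared distances computed on the raw vectors $X_i$ in the projected space. I would therefore keep the proof to a couple of sentences and emphasize this interpretive point rather than any calculation.
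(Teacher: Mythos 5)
Your proof is correct and takes essentially the same route as the paper's: cancel the common mean in $Y_1 - Y_2$, factor out $W_D^T$ by linearity, and take norms. Your added observation that neither orthogonality of $W_D$ nor independence of $X_1, X_2$ is actually used — only equality of means — is accurate and a nice clarification the paper leaves implicit.
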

\begin{proof}
\vspace{-0.1cm}
\small
{\color{black}
$ \|{W_D}^T Y_1 - {W_D}^T Y_2\|_2^2 = \|{W_D}^T (Y_1 - Y_2)\|_2^2 \\
 = \|{W_D}^T [(X_1 - \mathbb{E}[X_1]) - (X_2 - \mathbb{E}[X_2])]\|_2^2 \\
 = \|{W_D}^T (X_1 - X_2)\|_2^2 = \|{W_D}^T X_1 - {W_D}^T X_2\|_2^2 $.
 }
\vspace{-0.2cm}
\end{proof}

Thus, without loss of generalization, we assume that the random vector in the subsequent proof is zero mean, i.e., $\mathbb{E}[X]=0$. The vectors used in this paper are all column vectors. In addition, we also provide a simple Lemma about \emph{orthogonal projection}, which will be used as a proof part in the following sections. 

\begin{myLemma}
\label{lemma:lemma_2}
Orthogonal projection does not change the sum of variances of all dimensions.
\end{myLemma}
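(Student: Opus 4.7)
The plan is to identify the ``sum of variances of all dimensions'' of a random vector $X \in \mathbb{R}^D$ with the trace of its covariance matrix, and then show that this trace is preserved under any orthogonal transformation $W_D$ with $W_D^T W_D = \mathbf{I}$ (as used in Lemma~\ref{lemma:zeromean}). Concretely, writing $\Sigma = \mathrm{Cov}(X)$, the sum of per-dimension variances is $\sum_{i=1}^{D} \mathrm{Var}(X_i) = \mathrm{tr}(\Sigma)$, so the claim reduces to showing $\mathrm{tr}(\mathrm{Cov}(W_D^T X)) = \mathrm{tr}(\Sigma)$.

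First, I would compute the covariance matrix of the projected vector. Using $\mathbb{E}[X] = 0$ (which we may assume by Lemma~\ref{lemma:zeromean}), a direct expansion gives
\[
\mathrm{Cov}(W_D^T X) \;=\; \mathbb{E}\!\left[(W_D^T X)(W_D^T X)^T\right] \;=\; W_D^T\, \Sigma\, W_D.
\]

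Second, I would take the trace and use its cyclic invariance:
\[
\mathrm{tr}(W_D^T \Sigma W_D) \;=\; \mathrm{tr}(\Sigma\, W_D W_D^T).
\]
Since $W_D$ is square with $W_D^T W_D = \mathbf{I}$, it is also true that $W_D W_D^T = \mathbf{I}$, and therefore $\mathrm{tr}(\Sigma W_D W_D^T) = \mathrm{tr}(\Sigma)$. Summing the diagonal entries on each side then yields $\sum_{i=1}^{D} \mathrm{Var}((W_D^T X)_i) = \sum_{i=1}^{D} \mathrm{Var}(X_i)$, which is the desired conclusion.

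The main obstacle here is essentially semantic rather than technical: the result itself is a one-line consequence of the cyclic trace identity, so there is no real calculation to grind through. The only care needed is in interpreting ``orthogonal projection'' consistently with the convention established in Lemma~\ref{lemma:zeromean}, namely a full orthogonal rotation by a square $W_D$ satisfying $W_D^T W_D = \mathbf{I}$; if one instead considered a rectangular matrix with orthonormal columns (a strict projection onto a subspace), the identity $W_D W_D^T = \mathbf{I}$ fails and only the one-sided inequality $\mathrm{tr}(W_D^T \Sigma W_D) \le \mathrm{tr}(\Sigma)$ would hold. Anchoring the argument to the square-orthogonal setting used in the preceding lemma is therefore the one step to state explicitly.
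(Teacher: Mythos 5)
Your proof is correct and matches the paper's argument in substance: both reduce the claim to inserting $W_D W_D^T = \mathbf{I}$ inside the expectation $\mathbb{E}[X^T X]$ (the paper writes $\mathbb{E}[X^T X] = \mathbb{E}[X^T W_D W_D^T X] = \sum_k \mathrm{Var}(w_k^T X)$ directly, while you phrase the same step via $\mathrm{tr}(\mathrm{Cov}(W_D^T X))$ and cyclicity of trace). Your closing remark about the square-vs.-rectangular distinction is a useful clarification but not a departure from the paper's route, since Lemma~\ref{lemma:zeromean} already fixes $W_D$ to be a $D\times D$ orthogonal matrix.
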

\begin{proof}
\vspace{-0.1cm}
\small
$ \sum_{k=1}^D \text{Var}(x_k) = \mathbb{E}[X^T X] - \mathbb{E}[X]^T \mathbb{E}[X] \\
= \mathbb{E}[X^T W_D {W_D}^T X] = \sum_{k=1}^D \text{Var}({w_k}^T X)$. 

\noindent {\color{black}{Where $x_k$ is the component of the zero-mean random vector $X$ in the k-th dimension.}}
\vspace{-0.2cm}
\end{proof}

\begin{myDef}[Distance Comparison Operation~\cite{gao2023high}]
\label{def:dcos}
Given a query $\bm{q}$, an object $\bm{o}$ and a distance threshold $r$, the \textbf{distance comparison operation} (DCO) is to decide whether object $\bm{o}$ has its distance $dis$ to $\bm{q}$ no greater than $r$ and if so, return $dis$. 
\end{myDef}


\section{Methodology}
We develop a new method called \emph{DADE} to perform DCOs with better efficiency than existing approaches. Specifically, \emph{DADE} first rotates the original Euclidean space with a data-aware orthogonal transformation and then conducts DCOs based on the projected space, in which the distance between a query and a candidate is estimated in a subspace with \emph{fewer} dimensions for better efficiency. Compared with existing studies, \emph{DADE} shows three main differences: (1) {\color{black}{Compared with random projection methods~\cite{datar2004locality,gan2012locality,sun2014srs} that project objects into vectors with \emph{equal} dimensions, which may be knowledge demanding and difficult to set in practice~\cite{gao2023high}, \emph{DADE} estimates the distance with \emph{adaptive} dimensions.}} (2) Compared with the SOTA method, i.e., $ADSampling$~\cite{gao2023high}, the transformation in \emph{DADE} is data-aware, which provides a more accurate distance approximation with the same level of running time. (3) The number of dimensions to be used for distance computation is determined by hypothesis testing, in which the \emph{unknown} data distribution for the hypothesis testing is empirically approximated. 
The details of \emph{DADE} are elaborated in the following sections.



\subsection{Unbiased Estimation}
An intuitive idea to speed up the process of DCOs is to determine whether $dis \le r$ without computing the exact distance between the query and the candidate. Thus, a set of distance approximation methods can be adopted. For example, $ADSampling$ conducts \emph{randomly orthogonal transformation} on an object with a random matrix $W_d \in \mathbb{R}^{D\times d}$ where $d\le D$, and then perform the DCOs with the approximate distance.
However, these methods are data-oblivious, which prevents them from optimum. To deal with this, we first provide a data-aware \emph{unbiased} estimation as the following Lemma. 

\begin{myLemma}
Given a set of orthogonal bases in $\mathbb{R}^{D}$ Euclidean space, i.e., $W_d := [w_1, w_2, \cdots, w_d] \in \mathbb{R}^{D\times d}$, where $\forall i\ne j$, ${w_i}^T w_j = 0$ and ${w_i}^T w_i = 1$. Assume $X, X_1, and \ X_2 \in \mathbb{R}^{D}$ are three independent and identically distributed random vectors. The following equation holds:
\begin{equation}
\small
\begin{aligned}
\mathbb{E}[\|X_1 - X_2\|_2^2] &= \frac{\sigma^2(1,D)}{\sigma^2(1,d)} \mathbb{E}[\|{W_d}^T X_1 - {W_d}^T X_2\|_2^2]
\end{aligned}
\end{equation}
\noindent where $\sigma^2(i,j)= \sum_{k=i}^j \text{Var}({w_k}^T X), 1 \leq i \leq j \leq D$ and $\text{Var}({w_k}^T X)$ indicates the variance of ${w_k}^T X$.
\end{myLemma}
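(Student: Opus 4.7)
The plan is to compute both expectations explicitly and then take the ratio. First, I would invoke Lemma \ref{lemma:zeromean} to assume without loss of generality that $\mathbb{E}[X]=0$, since both sides of the claimed identity involve only differences of i.i.d.\ copies, and that lemma shows the projected squared norms are invariant under mean subtraction (and the same argument without the $W$ factor handles the unprojected side). Under the zero-mean assumption, every linear functional $w_k^T X$ also has mean zero, which makes the variance computations clean.

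Next I would handle the right-hand expectation. Expanding the norm coordinate-wise in the orthonormal basis $w_1,\ldots,w_d$ gives
\[
\|W_d^T X_1 - W_d^T X_2\|_2^2 \;=\; \sum_{k=1}^d (w_k^T X_1 - w_k^T X_2)^2.
\]
Taking expectation and using that $X_1,X_2$ are i.i.d.\ and zero-mean, each term equals $\mathrm{Var}(w_k^T X_1 - w_k^T X_2) = 2\,\mathrm{Var}(w_k^T X)$, so the right-hand expectation equals $2\sigma^2(1,d)$.

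For the left-hand expectation, the key step is to extend the given orthonormal system $w_1,\ldots,w_d$ to a full orthonormal basis $w_1,\ldots,w_D$ of $\mathbb{R}^D$ (e.g.\ by Gram--Schmidt completion), giving an orthogonal matrix $W_D$. Repeating the same coordinate-wise expansion gives $\mathbb{E}[\|W_D^T X_1 - W_D^T X_2\|_2^2] = 2\sigma^2(1,D)$. But $W_D$ is orthogonal, so $\|W_D^T X_1 - W_D^T X_2\|_2 = \|X_1 - X_2\|_2$; equivalently Lemma \ref{lemma:lemma_2} says the total variance is preserved, so $\mathbb{E}[\|X_1-X_2\|_2^2] = 2\sigma^2(1,D)$. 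Dividing then yields the stated ratio $\sigma^2(1,D)/\sigma^2(1,d)$.

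There is no real obstacle here; the argument is essentially bookkeeping. The only point that needs care is justifying the completion of the orthonormal basis and then invoking Lemma \ref{lemma:lemma_2} (or equivalently, the orthogonal invariance of the Euclidean norm) to convert the full-dimensional expectation into a sum of variances along the chosen basis. Once that is in place, the identity follows by cancelling the common factor of $2$.
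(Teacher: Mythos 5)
Your proposal is correct and follows essentially the same route as the paper: reduce to the zero-mean case via Lemma~\ref{lemma:zeromean}, show $\mathbb{E}[\|W_d^T X_1 - W_d^T X_2\|_2^2] = 2\sigma^2(1,d)$, use orthogonal invariance at $d=D$ to identify the left side with $2\sigma^2(1,D)$, and divide. The only cosmetic difference is that you expand the squared norm coordinate-wise over the orthonormal vectors $w_k$ and use $\mathrm{Var}(w_k^T X_1 - w_k^T X_2) = 2\,\mathrm{Var}(w_k^T X)$ directly, whereas the paper reaches the same intermediate identity through a trace manipulation of $W_d^T(X_1-X_2)(X_1-X_2)^T W_d$.
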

\begin{proof}
For $\forall d \in \{1,2,...,D\}$, we have:
\begin{equation}
\small
\label{eq:eq2}
\begin{aligned}
& \mathbb{E}[\|{W_d}^T X_1 - {W_d}^T X_2\|_2^2] = \mathbb{E}[[{W_d}^T (X_1 - X_2)]^T [{W_d}^T (X_1 - X_2)]] \\
&= \mathbb{E}[\text{tr}((X_1 - X_2)^T W_d {W_d}^T (X_1 - X_2))] \\
&= \mathbb{E}[\text{tr}({W_d}^T (X_1 - X_2) (X_1 - X_2)^T W_d)] \\
&= \text{tr}({W_d}^T (\mathbb{E}[X_1 {X_1}^T] + \mathbb{E}[X_2 {X_2}^T] - \mathbb{E}[X_1 {X_2}^T] - \mathbb{E}[X_2 {X_1}^T]) W_d) \\
&= \text{tr}({W_d}^T (2\mathbb{E}[X X^T] - \mathbb{E}[X_1] \mathbb{E}[{X_2}^T] - \mathbb{E}[X_2] \mathbb{E}[{X_1}^T]) W_d) \\
&= 2\mathbb{E}[\text{tr}({W_d}^T X X^T W_d)] {\color{black}{ = 2\mathbb{E}[X^T W_d {W_d}^T X]}} = 2\mathbb{E}\left[\sum_{k=1}^d (X^T w_k {w_k}^T X) \right] \\
&= 2\sum_{k=1}^d \mathbb{E}[({w_k}^T X)^2] = 2\sum_{k=1}^d \text{Var}({w_k}^T X) = 2\sigma^2(1,d)
\end{aligned}
\end{equation}


{\color{black}{From line 1 to line 2, the inner part of the expectation is a real number. Thus, the trace operation can be safely added. From line 2 to line 3, the orders among elements are changed thanks to the nature of trace operation. From line 3 to line 4, the order of the experation and trace is changed, where the correctness stems form the addictivity of mathematical expectations. Line 5 is obtained since all $\mathbb{E}[X_i]$ are equal to 0 (see the zero-mean assumption in \textbf{Lemma~\ref{lemma:zeromean}}). }}

When $d=D$, we have $\mathbb{E}[\|X_1 - X_2\|_2^2] = \mathbb{E}[\|{W_D}^T (X_1 - X_2)\|_2^2]$ due to the property of orthogonal projection. Thus, we have the following equation.
\begin{equation}
\label{eq:unbiased_estimation}
\begin{aligned}
\mathbb{E}[\|X_1 - X_2\|_2^2] &= \mathbb{E}[\|{W_D}^T (X_1 - X_2)\|_2^2] \\
&= \frac{\sigma^2(1,D)}{\sigma^2(1,d)} \mathbb{E}[\|{W_d}^T X_1 - {W_d}^T X_2\|_2^2]
\end{aligned}
\end{equation}
\vspace{-0.3cm}
\end{proof}

From above Lemma, we can see that the \emph{exact} distance can be \emph{unbiasedly} estimated by the variance of each dimension and the distance in projected $\mathbb{R}^{d}$ Euclidean space. 

\subsection{{\color{black}{Optimized Estimation}}}
{\color{black}{Equation~\ref{eq:unbiased_estimation} provides a general formulation of the unbiased estimation of Euclidean distance in terms of data distribution. In this subsection, we further study the optimal estimation, which minimizes the variance of the difference between the estimated distance and the true distance.}}
We let $\Delta X := X_1 - X_2 \in \mathbb{R}^{D}$. 
{\color{black}{For any $d$, the optimal estimation can be described as an optimization problem:}}
\begin{flalign}
\label{eq:optimization}
\begin{split}
\min_{W_D \in \mathbb{R}^{D \times D}} \quad & \mathbb{E}\left[\left(\frac{\sigma^2(1,D)}{\sigma^2(1,d)} \|{W_d}^T \Delta X\|_2^2 - \|{W_D}^T \Delta X\|_2^2 \right)^2 \right] \\
\text{s.t.} \quad & {W_D}^T W_D = \mathbf{I}
\end{split}
\end{flalign}


{\color{black}{It should be noted that we should only have a uniform $W_D$, where $W_d$ is the first $d$ dimension of $W_D$, since obtaining different $W_d$ for each $d$ is time-consuming to perform transformation and memory-consuming to store such plenty of transformed data objects. Thus, we propose to optimize an alternative objective as described as follows.}}


\begin{myLemma}
{\color{black}{For any $d$, minimizing Equation~\ref{eq:optimization} can be approximately achieved through maximizing $\sigma^2(1,d)$.}}
\end{myLemma}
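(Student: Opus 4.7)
The plan is to turn the squared-error minimization in Equation~\ref{eq:optimization} into a scalar problem that depends only on the invariant $\sigma^2(1,d)$. The first step is to split the full orthogonal transformation into its first $d$ columns $W_d$ and its last $D-d$ columns, which I denote $\tilde W_{D-d}$. Since $W_D^T W_D = \mathbf{I}$, the identity
\begin{equation*}
\|W_D^T \Delta X\|_2^2 = \|W_d^T \Delta X\|_2^2 + \|\tilde W_{D-d}^T \Delta X\|_2^2
\end{equation*}
holds. Writing $a := \|W_d^T \Delta X\|_2^2$, $b := \|\tilde W_{D-d}^T \Delta X\|_2^2$, $\alpha := \sigma^2(1,d)$, and $\beta := \sigma^2(d+1,D)$, the preceding lemma applied to each block gives $\mathbb{E}[a] = 2\alpha$ and $\mathbb{E}[b] = 2\beta$. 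Crucially, by Lemma~\ref{lemma:lemma_2}, $\alpha + \beta = \sigma^2(1,D)$ is a constant independent of $W_D$, so maximizing $\sigma^2(1,d)$ is exactly equivalent to minimizing $\beta$.

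Once the split is in place, the estimator becomes $\hat D := \frac{\alpha+\beta}{\alpha} a$, and a short algebraic manipulation yields $\hat D - \|W_D^T \Delta X\|_2^2 = \frac{\beta}{\alpha} a - b$. Squaring, taking expectation, and substituting $\mathbb{E}[a^2] = \text{Var}(a) + 4\alpha^2$, $\mathbb{E}[b^2] = \text{Var}(b) + 4\beta^2$, and $\mathbb{E}[ab] = \text{Cov}(a,b) + 4\alpha\beta$, the constant-order terms cancel (the same cancellation that underlies the unbiasedness lemma), and the objective collapses to
\begin{equation*}
\frac{\beta^2}{\alpha^2}\text{Var}(a) - 2\frac{\beta}{\alpha}\text{Cov}(a,b) + \text{Var}(b).
\end{equation*}
Each of the three summands carries a factor that vanishes with $\beta$: the first two explicitly, and the third because $b$ is a squared norm whose expected magnitude is $2\beta$. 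Consequently, driving $\beta$ down --- equivalently, pushing $\alpha$ up --- shrinks the whole expression.

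The main obstacle is making the word \emph{approximately} in the statement honest: the three second moments $\text{Var}(a)$, $\text{Var}(b)$, and $\text{Cov}(a,b)$ depend on the full choice of $W_D$, not only on the scalars $\alpha$ and $\beta$. I would address this by arguing that, to leading order, these quantities scale like $\alpha^2$, $\beta^2$, and $\alpha\beta$ respectively, which is exact when the transformed $\Delta X$ is Gaussian and a standard heuristic otherwise; under this scaling the mean-square error reduces to a constant multiple of $\beta^2$, clearly minimized when $\alpha$ is largest. A complementary and weaker bypass is a Cauchy--Schwarz bound of the form $\bigl(\frac{\beta}{\alpha}\sqrt{\text{Var}(a)} + \sqrt{\text{Var}(b)}\bigr)^2$, which is monotone in $\beta$ under mild boundedness assumptions on the second moments. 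Either route delivers the claimed approximate equivalence between the variational problem in Equation~\ref{eq:optimization} and the simpler objective of maximizing $\sigma^2(1,d)$, while being transparent about the heuristic step that keeps the equivalence from being exact.
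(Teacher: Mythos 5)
Your argument is correct in its core computations and reaches the same reduction as the paper, but by a genuinely different route. The paper packages the squared error as a trace expression $\text{tr}(\Sigma L_d \Sigma L_d)$ with $\Sigma := W_D^T\Delta X\Delta X^T W_D$ and a signed diagonal matrix $L_d$, then runs a chain of trace/Frobenius-norm inequalities to obtain the upper bound $\|\Delta X\Delta X^T\|_F^2\,\|L_d^{1/2}\|_F^4$; since the first factor is independent of $W_D$, the surrogate objective is $\|L_d^{1/2}\|_F^4$, which depends only on the ratio $\sigma^2(d+1,D)/\sigma^2(1,d)$, and Lemma~\ref{lemma:lemma_2} finishes. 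You instead split the norm into the two orthogonal blocks $a$ and $b$, use the already-established block expectations $\mathbb{E}[a]=2\alpha$, $\mathbb{E}[b]=2\beta$ to cancel the constant-order terms exactly, and reduce the MSE to $\frac{\beta^2}{\alpha^2}\text{Var}(a)-2\frac{\beta}{\alpha}\text{Cov}(a,b)+\text{Var}(b)$, which you then argue shrinks with $\beta$. The algebra up to this point is exact and is arguably more transparent than the paper's inequality chain, because it shows precisely what quantity is being minimized rather than an upper bound on it. What you pay for that transparency is that your final step — asserting that $\text{Var}(a)\sim\alpha^2$, $\text{Var}(b)\sim\beta^2$, $\text{Cov}(a,b)\sim\alpha\beta$ — is a distributional heuristic (exact for Gaussians, a bounded-kurtosis assumption otherwise), whereas the paper's approximation step is the looseness of the Frobenius bound itself; both locate the ``approximately'' in the lemma's statement but in different places. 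You correctly flag this, and your Cauchy--Schwarz fallback is a sensible way to make the monotonicity claim without the Gaussian assumption, though the required ``mild boundedness of second moments'' is exactly a uniform fourth-moment or kurtosis control, so the two escape routes you list are essentially the same assumption. Overall: a valid alternative derivation, not the paper's, with the heuristic step honestly identified.
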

\begin{proof}
For the internal part, we have:
\begin{equation}
\small
\label{eq:lemma3_eq1}
\begin{aligned}
& \left(\frac{\sigma^2(1,D)}{\sigma^2(1,d)} \|{W_d}^T \Delta X\|_2^2 - \|{W}_D^T \Delta X\|_2^2 \right)^2 \\
& = \left[ \frac{\sigma^2(1,D)}{\sigma^2(1,d)} \sum_{k=1}^d \left(\Delta X^T w_k {w_k}^T \Delta X \right) - \sum_{k=1}^D \left(\Delta X^T w_k {w_k}^T \Delta X \right)\right]^2 \\
& = \left\{ \sum_{k=1}^D \left[ \left( \frac{\sigma^2(d+1,D)}{\sigma^2(1,d)} \mathbb{I}_{k \leq d} - \mathbb{I}_{k > d}\right) \Delta X^T w_k {w_k}^T \Delta X\right] \right\}^2
\end{aligned}
\end{equation}
\noindent where $\mathbb{I}_{k\le d}$ equals to $1$ if $k\le d$, otherwise $0$. We define $L_d\ \in \mathbb{R}^{D\times D}$ as a diagonal matrix as follows:
\begin{equation}
\small
L_d := \text{diag}\left(\frac{\sigma^2(d+1,D)}{\sigma^2(1,d)},...,\frac{\sigma^2(d+1,D)}{\sigma^2(1,d)},-1,...,-1 \right)
\end{equation}
\noindent where the first $d$ elements of $L_d$ are $\sigma^2(d+1,D)/\sigma^2(1,d)$, and the remaining in the main diagonal elements are $-1$. We also define {\color{black}{$\Sigma := {W_D}^T \Delta X \Delta X^T W_D \ \in \mathbb{R}^{D\times D}$}}. With these definitions, Equation~\ref{eq:lemma3_eq1} can be rewritten in matrix form as follows:
\begin{equation}
\small
\begin{aligned}
& \left\{ \sum_{k=1}^D \left[ \left( \frac{\sigma^2(d+1,D)}{\sigma^2(1,d)} \mathbb{I}_{k \leq d} - \mathbb{I}_{k > d} \right) \Delta X^T w_k {w_k}^T \Delta X \right] \right\}^2 \\
& = (\Delta X^T W_D L_d {W_D}^T \Delta X)^2 = \Delta X^T W_D L_d {W_D}^T \Delta X \Delta X^T W_D L_d {W_D}^T \Delta X \\
& = \text{tr}(\Delta X^T W_D L_d {W_D}^T \Delta X \Delta X^T W_D L_d {W_D}^T \Delta X) \\
& = \text{tr}(({W_D}^T \Delta X) (\Delta X^T W_D L_d {W_D}^T \Delta X \Delta X^T W_D L_d)) \\
& {\color{black}{ = \text{tr}(\Sigma L_d \Sigma L_d) \le \text{tr}\left(\Sigma {L_d}^{\frac{1}{2}} ({L_d}^{\frac{1}{2}})^H \Sigma {L_d}^{\frac{1}{2}} ({L_d}^{\frac{1}{2}})^H \right) }} \\
& {\color{black}{ = \text{tr}\left(({L_d}^{\frac{1}{2}})^H \Sigma {L_d}^{\frac{1}{2}} ({L_d}^{\frac{1}{2}})^H \Sigma {L_d}^{\frac{1}{2}} \right) = \| ({L_d}^{\frac{1}{2}})^H \Sigma {L_d}^{\frac{1}{2}} \|_F^2}} \\
& {\color{black}{ \le \| \Sigma \|_F^2 \| {L_d}^{\frac{1}{2}} \|_F^4 = \| {W_D}^T \Delta X \Delta X^T W_D \|_F^2 \| {L_d}^{\frac{1}{2}} \|_F^4 = \| \Delta X \Delta X^T \|_F^2 \| {L_d}^{\frac{1}{2}} \|_F^4}}
\end{aligned}
\end{equation}
{\color{black}{Thus, our optimization goal (c.f. Equation~\ref{eq:optimization}) can be approximately presented as follows.}}
\begin{equation}
\small
\label{eq:rewritten}
\begin{aligned}
{\color{black}{\min_{W_D \in \mathbb{R}^{D \times D}}}}{\color{black}{ \mathbb{E}[\| \Delta X \Delta X^T \|_F^2 \| {L_d}^{\frac{1}{2}} \|_F^4] = \mathbb{E}[\| \Delta X \Delta X^T \|_F^2 ] \| {L_d}^{\frac{1}{2}} \|_F^4 }}
\end{aligned}
\end{equation}
\noindent 
{\color{black}{From this equation, we can know that this objective can be achieved through minimizing $\| {L_d}^{\frac{1}{2}} \|_F^4$. According to the definition of $L_d$, the final optimization is to minimize $\sigma^2(d+1,D)/\sigma^2(1,d)$. From \textbf{Lemma~\ref{lemma:lemma_2}}, we can know that orthogonal projection does not change the sum of variances of all dimensions, which means $\sigma^2(d+1, D) + \sigma^2(1,d)$ is a constant. Therefore, Equation~\ref{eq:rewritten} is equivalent to maximize $\sigma^2(1,d)$. Proof complete. }} 
\end{proof}

Thus, from this Lemma, by the definition of $\sigma^2(1,d)$, our optimization goal can be reformulated as follows:
\begin{equation}
\small
\label{eq10}
\begin{split}
\max_{W_D \in \mathbb{R}^{D\times D}}\sigma^2(1,d) &= \sum_{k=i}^d \text{Var}({w_k}^T X) = \mathbb{E}[X^T W_d {W_d}^T X] \\
&= \max \text{tr}({W_d}^T \mathbb{E}[X X^T] W_d) \\ 
\text{s.t.} \ \  & {W_d}^T W_d = \mathbf{I}
\end{split}
\end{equation}

According to the previous study~\cite{zhou2021machine}, it is easy to know that this is the optimization objective of principal components analysis (PCA), in which $\mathbb{E}[X X^T]$ is approximated by all data objects.
Therefore, the solution for this problem can be obtained through matrix decomposition on $\mathbb{E}[X X^T]$, where $\lambda_k$ is the $k$ \emph{largest} eigenvalue and $w_k$ is the corresponding eigenvector. 
\begin{equation}
\small
\label{eq11}
\begin{split}
    \mathbb{E}[X X^T] w_k = \lambda_k w_k,\ for\ k=1,2,...,d 
\end{split}
\end{equation}

Moreover, we can also notice the following equation hold: 
\begin{equation}
\small
\label{eq:eigenvalue}
\begin{split}
\text{Var}({w_k}^T X) = \mathbb{E}[{w_k}^T X X^T w_k] = {w_k}^T \mathbb{E}[X X^T] w_k = \lambda_k
\end{split}
\end{equation}

{\color{black}{Hence, combining Equation~\ref{eq:unbiased_estimation} and~\ref{eq:eigenvalue}, we can know that the optimized distance estimation is as follows}}:
\begin{equation}
\small
\label{eq:pca_dis_estimate}
\begin{aligned}
\mathbb{E}[\|X_1 - X_2\|_2^2] &= \frac{\sum_{k=1}^D \lambda_k}{\sum_{k=1}^d \lambda_k} \mathbb{E}[\|{W_d}^T X_1 - {W_d}^T X_2\|_2^2]
\end{aligned}
\end{equation}

{\color{black}{It is worth mentioning that for any $d$, the transformation $W_d$ obtained through PCA is optimal for Equation~\ref{eq10}. Thus, we only need to transform once and store the transformed data objects once to make the time and space consumption acceptable. We provide an empirical study on the DEEP dataset to show the variance of each dimension in the projected space by comparing \emph{PCA} and \emph{randomly orthogonal projection} (ROP) as shown in the left panel of Figure~\ref{fig:running_example}.}}
We can see that the random approach has almost uniform variance while \emph{PCA} can achieve greater variance with fewer dimensions. Thus, from Equation~\ref{eq:pca_dis_estimate}, we can know that \emph{PCA} is a more powerful approach to approximate the exact distance. It should be noted that the distance estimation of $ADSampling$~\cite{gao2023high}
is \emph{unbiased} in terms of the transformation rather than the data distribution. 

\begin{figure}
  \centering
  \includegraphics[width=\linewidth]{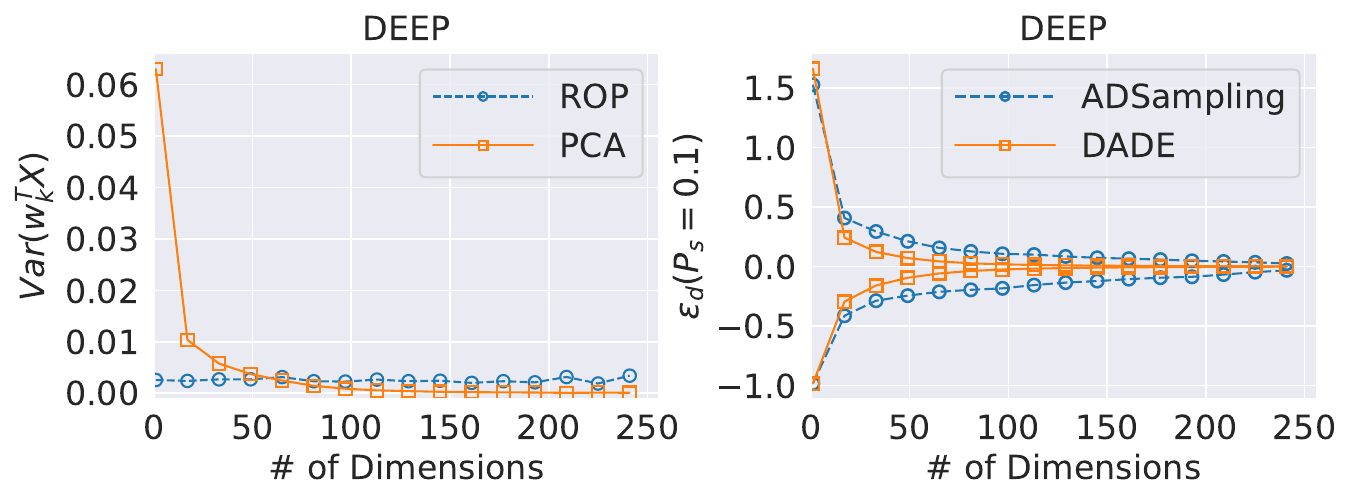}
  \caption{{\color{black}{Running Example on DEEP.}}}
  \label{fig:running_example}
\vspace{-0.3cm}
\end{figure}

\begin{algorithm}[t]
\caption{\emph{DADE}}
\label{alg:dade}
\KwIn{A transformed data vector $o'$, a transformed query vector $q'$, a distance threshold $r$ and the incremental step size $\Delta_d$}
\KwOut{The results of DCO (i.e., whether $dis < r$): $1$ means yes and $0$ means no; When the answer is yes, the exact distance is also returned}
Initialize the number of sampled dimensions $d$ to be $0$\; \label{alg:line1}
\While{$d < D$}{
$d=d+\Delta_d$\; \label{alg:line3}
{\color{black}{Using the first $d$ dimensions to compute the estimated distance $dis'$ according to Equation~\ref{eq:pca_dis_estimate}\;}} \label{alg:line4}
{\color{black}{Conduct the hypothesis testing as stated in section~\ref{sec:hypothesis}\;}} \label{alg:line5}
\If{$H_0$ is rejected and $d<D$}{ \label{alg:line6}
\Return 0\; \label{alg:line7}
}
\ElseIf{$H_0$ is not rejected and $d<D$}{ \label{alg:line9}
\textbf{Continue}\; \label{alg:line10}
}
\Else{
\Return $1$ and $dis'$ if $dis'\le r$ and $0$ otherwise\; \label{alg:line13}
}
}
\end{algorithm}
\vspace{-0.3cm}

\subsection{Dimension Expansion with Hypothesis Testing}
\label{sec:hypothesis}
Projecting objects into vectors with \emph{equal} dimensions to approximate the \emph{exact} distance usually has two issues. First, it is knowledge-demanding and difficult to determine the number of dimensions to approximate the \emph{exact} distance with sufficient confidence in practice. Second, conducting DCOs with \emph{equal} dimensions has inferior accuracy since different objects may require different numbers of dimensions to make a good decision (see Section~\ref{sec:feasibility}). Therefore, we leverage hypothesis testing to \emph{adaptively} determine the number of dimensions when the query object $q$, the candidate object $o$, and the distance of the $K$-th nearest neighbor are given. 

Specifically, we propose to determine the number of dimensions of distance estimation in an \emph{incremental} manner, which is similar to the method $ADSampling$. However, different from $ADSampling$ that has \emph{unbiased} estimation in terms of the transformation matrix, in which there exists a concentration inequality on the approximate distance (see Lemma 3 in~\cite{gao2023high} for details), our method \emph{DADE} provides an \emph{unbiased} estimation in terms of data distribution, where the data distribution is unknown and has unclear expression. Thus, it is difficult to set the significance level in the hypothesis testing. To deal with this problem, we define the following probability.
\begin{equation}
\small
\begin{aligned}
\mathbb{P}\{\frac{\|\frac{\sum_{k=1^D} \lambda_k}{\sum_{k=1^d} \lambda_k} {W_d}^T (X_1 - X_2)\|}{\|X_1 - X_2\|} - 1 > \epsilon_d \} = P_s
\end{aligned}
\end{equation}
\noindent where $P_s$ is the significance level to be set as a hyper-parameters. It means a probability that the difference between the approximated distance and the exact distance is greater than $\epsilon_d$. With the provided $P_s$, $\epsilon_d$ can be estimated through uniformly sampled data objects. It should be noted that $\epsilon_d$ may be different for a fixed $P_s$ and different $d$. Then, we define $dis'=\|\frac{\sum_{k=1^D} \lambda_k}{\sum_{k=1^d} \lambda_k} {W_d}^T (X_1 - X_2)\|$ as the estimated distance and $dis=\|X_1 - X_2\|$ as the exact distance for presentation convenience. The hypothesis testing can be conducted as follows. 

\noindent {\color{black}{(1) We define the null hypothesis $H_0: dis < r$ and its counterpart $H_1: dis \ge r$.}}

\noindent (2) We set the significance level $P_s$ empirically as a small value (e.g., $0.1$ in our experiments), which indicates that the difference between $dis'$ and $dis$ is bounded by $\epsilon_d \cdot dis$ with the failure probability at most $P_s$. 

\noindent (3) We check whether the event $dis' > (1 + \epsilon_d) \cdot r$ happens. If so, we can reject $H_0$ and conclude $H_1: dis > r$ with sufficient confidence since this event has a small probability, which is almost impossible to happen in one experiment. 

We also provide an empirical study on DEEP dataset as shown in the right part of Figure~\ref{fig:running_example}, in which the $x$-axis indicates the number of dimensions and $y$-axis presents $\epsilon_d$ where $\mathbb{P}(dis'/dis - 1 > \epsilon_d) = 0.1$ for the upper two curves and $\mathbb{P}(dis'/dis - 1 < \epsilon_d) = 0.1$ for the bottom two curves. From this figure, we can have two observations: (1) \emph{PCA} has a better approximation to the exact distance since it has smaller deviations with the same number of dimensions; (2) compared with \emph{random orthogonal projection}, \emph{PCA} needs smaller dimension to reach the same significance level for the estimated distance, which means \emph{DADE} is more efficient since it is more likely to reject $H_0$ compared with $ADSampling$ when the significance level is fixed. 

\subsection{\emph{DADE} Summarization}
The process of \emph{DADE} is summarized in Algorithm~\ref{alg:dade}, which takes the transformed data and query vectors, a distance threshold $r$ (i.e., the distance between the query and the $K$-th nearest neighbor), and $\Delta_d$ as inputs. It runs in an \emph{incremental} way, which initializes the number of dimensions as $0$ and increment it with $\Delta_d$ (Lines~\ref{alg:line1}-\ref{alg:line3}). At each loop, we first calculate the estimated distance $dis'$ (Line~\ref{alg:line4}) and then conduct the hypothesis testing (Line~\ref{alg:line5}). {\color{black}{If $H_0$ is rejected, we can conclude that $dis > r$ with sufficient confidence and exit the DCOs program immediately (Lines~\ref{alg:line6}-\ref{alg:line7}).}} If $H_0$ is not rejected, it means that we do not have enough confidence to judge whether $dis < r$. Thus, we have to continue to increment the number of dimensions to obtain more accurate $dis'$ (Lines~\ref{alg:line9}-\ref{alg:line10}). For the other situation (i.e., $d=D$), the $dis'$ will be the exact distance. Thus, we will directly compare $dis'$ and $r$ and return the results (Line~\ref{alg:line13}). 


\noindent \textbf{Failure Probability Analysis.} 
We provide the following Lemma to present the failure probability when \emph{DADE} is adopted.

\begin{myLemma}
The failure probability of \emph{DADE} is given by 
\begin{equation}
\small
\begin{aligned}
\mathbb{P}\{failure\} &= 0 \ if \ dis > r \\
\mathbb{P}\{failure\} &{\color{black}{\le \lfloor \frac{D-1}{\Delta d} \rfloor \cdot P_s \ if \ dis \le r}}
\end{aligned}
\end{equation}
\end{myLemma}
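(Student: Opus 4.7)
The plan is to split the argument into the two cases stated and to lean on two facts already established: (i) when $d=D$ the orthogonality of $W_D$ forces $dis'=dis$ exactly (so the final branch of Algorithm~\ref{alg:dade} always returns the correct answer), and (ii) the definition of $\epsilon_d$ gives a per-iteration deviation bound of exactly $P_s$. With these in hand, the problem reduces to tracking when the algorithm can possibly return a wrong answer and then applying a union bound.

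For the easy case $dis>r$, I would observe that the only line in Algorithm~\ref{alg:dade} that can output $1$ is Line~\ref{alg:line13}, which is reached only with $d=D$. By the orthogonality of $W_D$ and \textbf{Lemma~\ref{lemma:zeromean}}, at that point the estimator collapses to $dis'=dis>r$, so Line~\ref{alg:line13} actually returns $0$. Every earlier branch can only return $0$, which is the correct answer in this regime. Hence no execution path produces a wrong answer, and $\mathbb{P}\{\text{failure}\}=0$.

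For the case $dis\le r$, a failure means the algorithm returns $0$; by the same orthogonality argument, this cannot occur at $d=D$, so the failure event is the disjunction over all intermediate $d\in\{\Delta_d,2\Delta_d,\dots\}$ with $d<D$ of the rejection event $\{dis'>(1+\epsilon_d)\,r\}$. Since $dis\le r$, each such rejection implies $dis'/dis-1>\epsilon_d$, which by the defining probability of $\epsilon_d$ occurs with probability at most $P_s$. The number of candidate values of $d$ strictly less than $D$ on the increment grid is at most $\lfloor(D-1)/\Delta_d\rfloor$, so a union bound over these values yields $\mathbb{P}\{\text{failure}\}\le \lfloor(D-1)/\Delta_d\rfloor\cdot P_s$, matching the claim.

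The only delicate point is the final union bound: the deviation events at different $d$ are not independent, since they all involve the same pair $(X_1,X_2)$ and nested projections $W_d$. I would emphasize that independence is not needed—Boole's inequality requires only that each individual event have marginal probability at most $P_s$, which is exactly how $\epsilon_d$ was defined. A secondary subtlety is the off-by-one in counting grid points; I would handle this by noting that iterations with $d\ge D$ bypass the rejection branches entirely (Lines~\ref{alg:line6}--\ref{alg:line10} guard on $d<D$), so only the strictly interior increments contribute to the union, giving the stated $\lfloor(D-1)/\Delta_d\rfloor$ factor.
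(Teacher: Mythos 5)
Your proof is correct and follows essentially the same strategy as the paper: split on the sign of $dis-r$, note the $d=D$ branch is exact so it never errs, reduce failure (when $dis\le r$) to a union of per-iteration rejection events over the $\lfloor(D-1)/\Delta_d\rfloor$ interior grid points, bound each event's probability by $P_s$ via the definition of $\epsilon_d$ after observing that $dis\le r$ lets you replace $(1+\epsilon_d)r$ by $(1+\epsilon_d)\,dis$, and apply Boole's inequality without needing independence. The only small misattribution is invoking Lemma~\ref{lemma:zeromean} for $dis'=dis$ at $d=D$; that identity follows directly from ${W_D}^T W_D=\mathbf{I}$, not from the mean-centering lemma.
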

\begin{proof}
From Algorithm~\ref{alg:dade}, it is known that \emph{DADE} exits when $dis'>r$ or $d=D$. If $dis>r$, in these situations, our proposed \emph{DADE} returns $0$, which is always correct. Thus, $\mathbb{P}\{failure\} = 0$ can be concluded when $dis>r$. Now we consider the other situation that $dis \le r$, which can be verified as follows.
\begin{equation}
\small
\begin{aligned}
& \mathbb{P}\{failure\} = \mathbb{P}\{\exists d < D, dis' > (1 + \epsilon_d) \cdot r\} \\
& {\color{black}{\le \sum_{d=1}^{\lfloor (D-1)/\Delta d \rfloor} \mathbb{P}\{dis' > (\epsilon_{d} + 1) \cdot dis\} \le \sum_{d=1}^{\lfloor (D-1)/\Delta d \rfloor} P_s}}
\end{aligned}
\end{equation}
\noindent where the first equation holds since a failure happens if and only if we reject $H_0$ for some $d<D$. 
\end{proof}



Recall that DCO is ubiquitous in almost all AKNN algorithms. For example, for a graph-based method such as HNSW, \emph{greedy beam search}~\cite{wang2021comprehensive} is conducted at layer $0$, which is also adopted by most graph-based approaches~\cite{fu12fast,jayaram2019diskann,malkov2018efficient}. It maintains two sets, i.e., a search set $\mathcal{S}$ and a result set $\mathcal{R}$, where the size of $\mathcal{S}$ is unbounded to store the candidates yet to be searched and the size of $\mathcal{R}$ is bounded by $N_{ef}$ to maintains the $N_{ef}$ nearest neighbors visited so far. At each iteration, it pops the object with the smallest distance in $\mathcal{S}$ and enumerates its neighbors. For each neighbor, it conducts DCO to check whether its distance to query is no greater than the maximum distance in $\mathcal{R}$. If so, it pushes this object into $\mathcal{R}$ and $\mathcal{S}$, in which the object with maximum distance in $\mathcal{R}$ will be removed whenever $\mathcal{R}$ is full. For a quantization-based method such as \emph{IVF}, it first selects the $N_{probe}$ nearest clusters based on the distance from the query to its centroids. Then, it scans all candidates, in which it maintains a KNN set $\mathcal{K}$ with a max-heap of size $K$~\cite{gao2023high}. For each one, it conducts DCO to check whether its distance to query is no greater than the maximum distance in $\mathcal{K}$. If so, it updates $\mathcal{K}$ with the scanned object. 

\section{Experiment}

\begin{table}[]
\centering
\small
\caption{Dataset Statistics}
\vspace{-0.2cm}
\label{tab:dataset}
\begin{tabular}{c|cccc}
\hline
Dataset  & Cardinality & Dimension & Query Size & Data Type \\ \hline
MSong    & 992,272     & 420       & 200        & Audio     \\
DEEP     & 1,000,000   & 256       & 1,000      & Image     \\
Word2Vec & 1,000,000   & 300       & 1,000      & Text      \\
GIST     & 1,000,000   & 960       & 1,000      & Image     \\
GloVe    & 2,196,017   & 300       & 1,000      & Text      \\
Tiny5M   & 5,000,000   & 384       & 1,000      & Image     \\ \hline
\end{tabular}
\vspace{-0.3cm}
\end{table}

\begin{figure*}
\vspace{-0.4cm}
  \centering
  \includegraphics[width=0.875\linewidth]{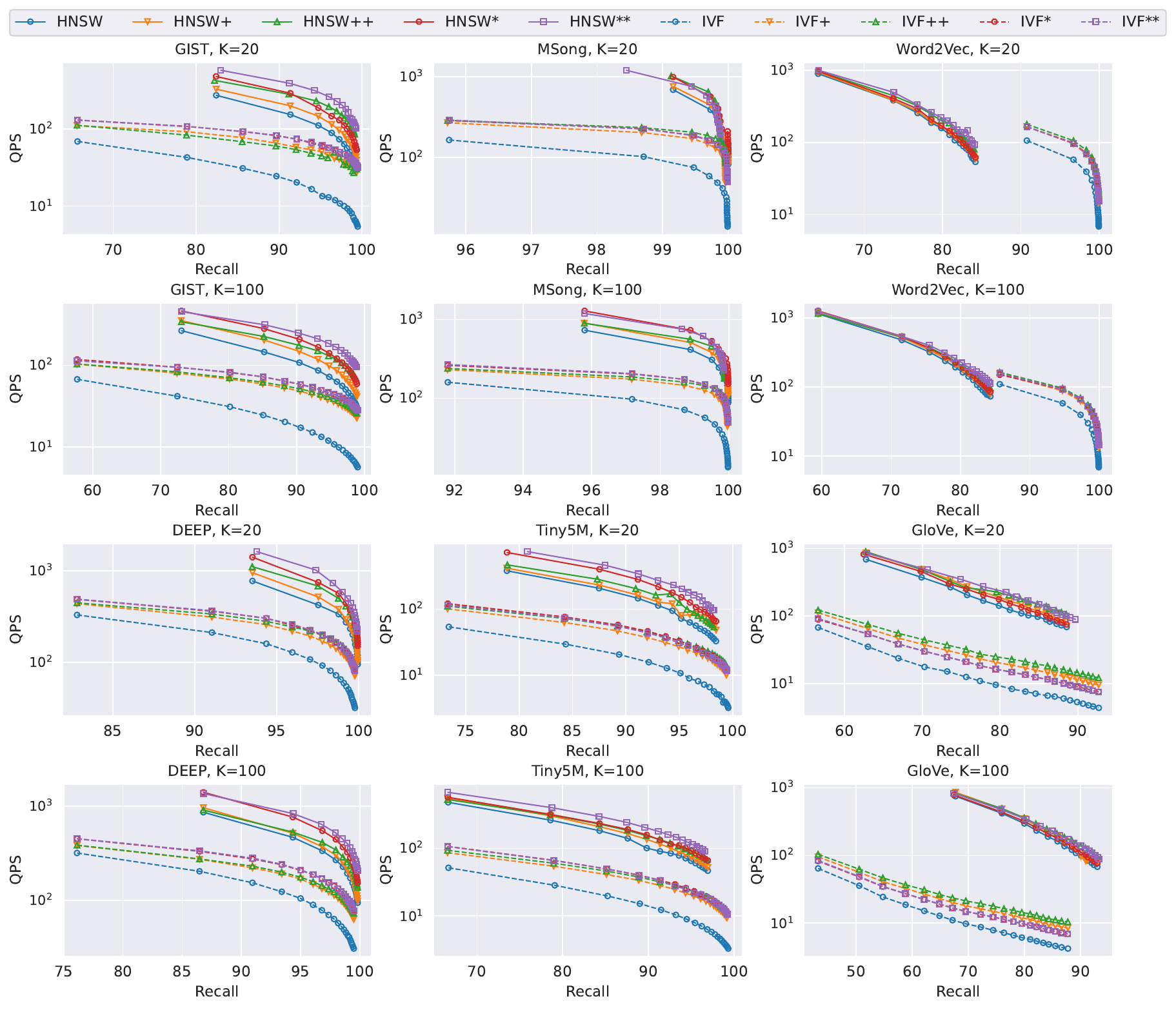}
  \vspace{-0.5cm}
  \caption{Time-Recall Tradeoff.}
  \label{fig:overall_performance}
\vspace{-0.36cm}
\end{figure*}

\subsection{Experimental Settings}
\textbf{Datasets.} 
We conduct our experiments on six public datasets with different cardinalities and dimensionalities to be in line with various benchmark AKNN algorithms~\cite{gao2023high,li2020improving,lu2021hvs}. The dataset statistics are shown in Table~\ref{tab:dataset}.

\noindent \textbf{Algorithms.} 
{\color{black}{For DCOs, We compare our method \emph{DADE} with the conventional method, i.e., $FDScanning$, and the \textbf{SOTA} approach, i.e., $ADSampling$~\cite{gao2023high}. The other distance estimation techniques such as Product Quantization~\cite{jegou2010product} is ignored in our experiments since $ADSampling$ has empirically demonstrate a superior performance compared with them~\cite{gao2023high}.}}
\begin{itemize}[leftmargin=*]
\item $FDScanning$: compute the \emph{exact} distance $dis$ with full $D$ dimensions, and then determine whether $dis < r$. 
\item $ADSamping$: estimate the distance in the low dimension space with \emph{randomly orthogonal transformation}, in which the number of sampled dimensions \emph{adaptively} evaluated, and then determine whether $dis < r$ with the approximated distance $dis'$. 
\end{itemize}
We combine each method of DCOs above with classical AKNN search algorithms, e.g., \emph{IVF} and \emph{HNSW}, and define a set of competitors as follows.
\begin{itemize}[leftmargin=*]
\item \emph{HNSW}~\cite{malkov2018efficient}: the vanilla hierachical navigable small world graph, in which $FDSanning$ is adopted as DCOs.
\item \emph{HNSW+}~\cite{gao2023high}: \emph{HNSW} with $ADSampling$ as DCOs.
\item \emph{HNSW++}~\cite{gao2023high}: \emph{HNSW} with optimizing through decoupling the roles of candidate list, i.e., one for providing the distance threshold for DCOs and one for maintaining the searched objects, in which $ADSamping$ is adopted as DCOs.
\item \emph{HNSW*}: \emph{HNSW} with our proposed \emph{DADE} as DCOs.
\item \emph{HNSW**}: \emph{HNSW++} with our proposed \emph{DADE} as DCOs.
\end{itemize}
Similarly, we define a set of variants of \emph{IVF}, where \emph{IVF++}~\cite{gao2023high} is the cache-friendly version. We refer the readers to the seminal work~\cite{gao2023high} for more details about the variants.

\noindent \textbf{Performance Metrics.}
We use Recall~\cite{zheng2020pm,zhao2023towards,gao2023high,gao2024rabitq}, i.e., the overlap ratio between the results returned by AKNN algorithms and the ground-truths. We adopt query-per-second (QPS)~\cite{lu2021hvs,gao2023high}, i.e., the number of handled queries per second, to measure efficiency. It should be noted that for AKNN search, greater QPS with the same recall indicates a better algorithm. 

\noindent \textbf{Implementations.}
We implement the \emph{HNSW}-related approaches such as \emph{HNSW**} based on hnswlib~\cite{malkov2018efficient} and implement \emph{IVF}-related approaches such as \emph{IVF**} based on Faiss~\cite{johnson2019billion} library. We obtain the transformation matrix for \emph{ADSampling} and \emph{DADE} by using NumPy library. In the query phase, all algorithms are implemented with C++. Following previous studies~\cite{gao2023high,wang2021comprehensive}, all hardware-specific optimizations including SIMD, and multi-threading are prohibited for a fair comparison. {\color{black}{All C++ codes are complied by g++ 7.5.0 with $-O3$ optimization and run in the platform with Ubuntu 16.04 operating system with 48-cores Intel(R) CPU E5-2650 v4 @ 2.20GHz 256GB RAM. }}

\noindent \textbf{Parameter Setting.}
For all \emph{HNSW}-related approaches, two hyper-parameters are empirically preset, i.e., the number of connected neighbors $M$ and the maximum size of the results set $efConstruction$. Following previous studies~\cite{gao2023high}, we set $M$ and $efConstruction$ to $16$ and $500$ respectively. For \emph{IVF}-related approaches, the number of clusters $N_c$ is the critical hyper-parameters in the index phase. Following Faiss~\cite{johnson2019billion}, we set $N_c$ to be around the square root of the cardinality, i.e., $4096$ in our experiments. For $ADSampling$, we use their default parameters, i.e., $\epsilon_0=2.1$. For \emph{DADE}, we empirically set $P_s$ to $0.1$ and the step size of dimension expansion $\Delta_d$ to $32$. 

\begin{figure*}
  \centering
  \includegraphics[width=0.98\linewidth]{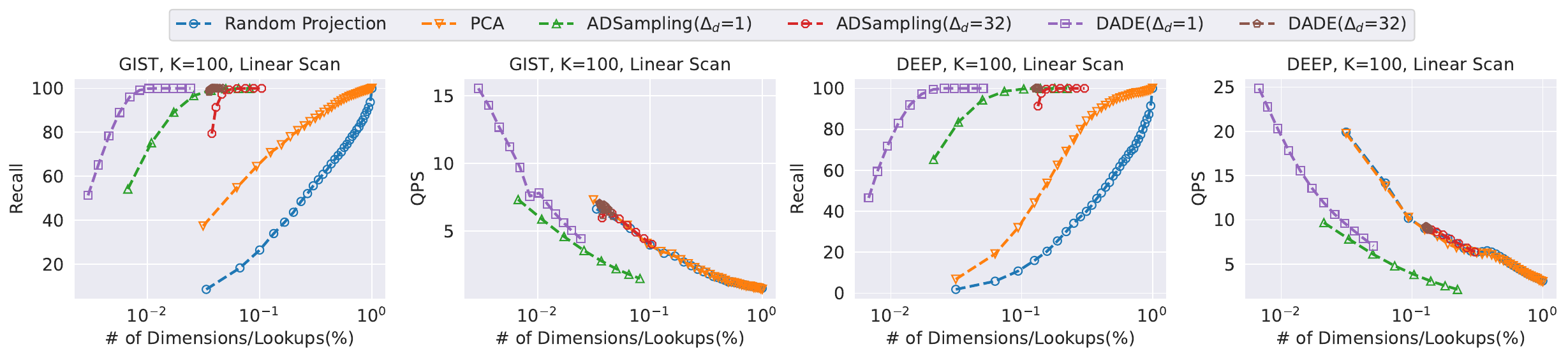}
  \vspace{-0.2cm}
  \caption{{\color{black}{Feasibility for DCOs in terms of Recall and QPS.}}}
  \label{fig:feasibility}
  \vspace{-0.2cm}
\end{figure*}

\begin{figure*}
  \centering
  \includegraphics[width=0.98\linewidth]{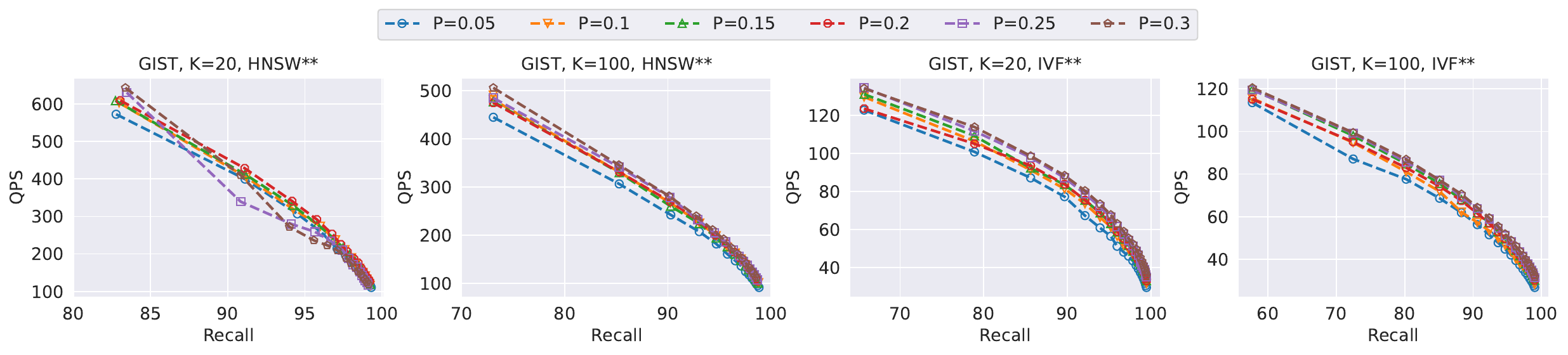}
  \vspace{-0.2cm}
  \caption{{\color{black}{Parameter Study on $p$ of AKNN** Algorithms with Different $K$.}}}
  \label{fig:sensitivity_p}
  \vspace{-0.2cm}
\end{figure*}

\begin{figure*}
  \centering
  \includegraphics[width=0.98\linewidth]{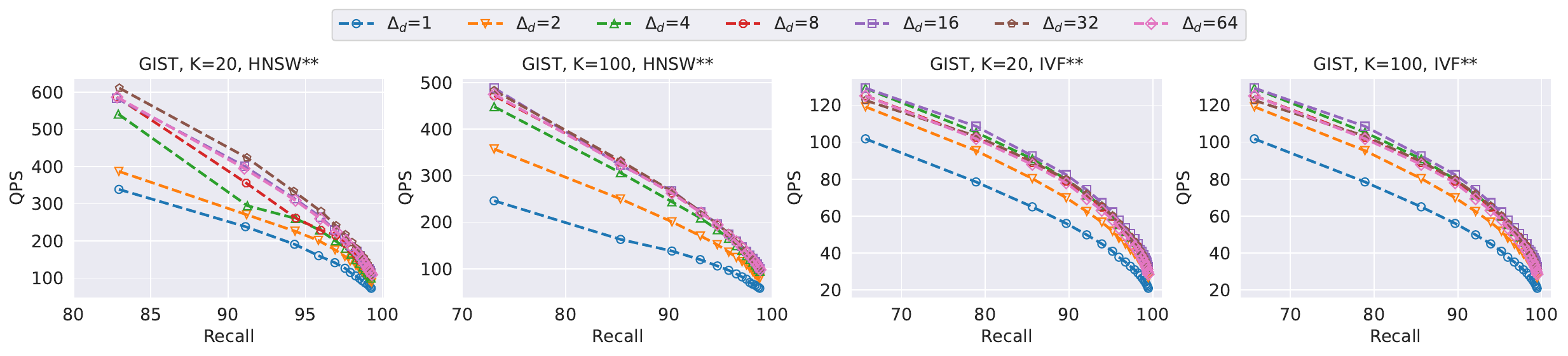}
  \vspace{-0.2cm}
  \caption{{\color{black}{Parameter Study on $\Delta_d$ of AKNN** Algorithms with Different $K$.}}}
  \label{fig:sensitivity_deltad}
\vspace{-0.2cm}
\end{figure*}

\subsection{Experimental Results}

\subsubsection{\textbf{Overall Performance.}} 
\label{sec:overall_performance}
We report the experimental results in terms of Recall and QPS in Figure~\ref{fig:overall_performance}. Specifically, we vary $N_{ef}$ (i.e., the maximum size of the result set $\mathcal{R}$) from $100$ to $1500$ with step size $100$ for the \emph{HNSW}-related approaches, and vary $N_{nprobe}$ (i.e., the number of clusters to be searched) from $20$ to $400$ with step size $20$ for \emph{IVF}-related approaches to show the trade-off between time and accuracy. From the results, we have the following observations. (1) From the index perspective, \emph{HNSW} outperforms \emph{IVF} in most cases. The techniques such as decoupling the roles of the candidate list for \emph{HNSW} and cache-friendly optimization for \emph{IVF} proposed in~\cite{gao2023high} are effective, which clearly improves the efficiency without affecting the accuracy (e.g., \emph{HNSW**} outperform \emph{HNSW*} with a large margin in most cases). (2) From the DCO perspective, \emph{DADE} and $ADSampling$ can achieve better trade-off compared with $FDScanning$ (e.g., \emph{IVF*} and \emph{IVF+} outperform the vanilla \emph{IVF}). Moreover, compared with the best-performing competitor $ADSampling$, our proposed method \emph{DADE} consistently improves the efficiency with a large margin. Recall that the only difference between the methods such as \emph{IVF*} and \emph{IVF+} lies in its method of DCO. For example, on the DEEP dataset with $K=100$, \emph{HNSW*} achieves the recall of $86\%$ with QPS of $140$, while \emph{HNSW+} has little decline of recall with QPS of $96$. In this situation, our method improves the efficiency by more than $45\%$. When focusing on the high accuracy region such as $99\%$ recall on the DEEP dataset, our method even provides a better improvement. For example, $HNSW*$ achieves a recall of $99.2$ with a QPS of $28$, while $HNSW+$ achieves the same recall with a QPS of $18$. 

\subsubsection{\textbf{Feasibility of Distance Estimation Methods for DCOs.}}
\label{sec:feasibility}
Next, we study the feasibility of various distance estimations, i.e., \emph{Random Projection}, \emph{PCA}, \emph{ADSampling}, and \emph{DADE}. To eliminate the effect of different index structures, we follow~\cite{gao2023high} to conduct this experiment with an exact KNN algorithm, called \emph{Linear Scan}. Specifically, we scan all the data objects and return the $K$ nearest neighbors to the queries. For \emph{ADSampling} and \emph{DADE}, we maintain a KNN set like \emph{IVF} and conduct DCOs for each object sequentially. We plot the curves of recall-number of dimensions and QPS-number of dimensions in Figure~\ref{fig:feasibility}, where the value of $x$-axis can be computed through the number of dimensions used for distance calculation in different DCO methods divides the total dimensions when \emph{FDScanning} is adopted. For \emph{random projection} and \emph{PCA}, we vary the dimensionality of the projected vectors (i.e., the estimated distance is based on \emph{equal} dimensions rather than \emph{adaptively} set in query phase). 

From the results, we can observe that (1) when reducing the number of dimensions to $0.1$, the recall of \emph{random projection} will no more than $40\%$; (2) when \emph{PCA} is adopted for distance estimation, the recall has significant improvement compared with \emph{random projeciton}, which demonstrates the necessity to perform data-aware distance estimation; (3) $ADSampling$ and \emph{DADE} only need less than $0.1$ dimensions on average to achieve more than $90\%$ recall, which may demonstrate that for an object that is far away from the query, we only need few dimensions to correctly confirm $dis>r$. Thus, the \emph{exact} distance computation in DCOs is unnecessary; (4) For $ADSampling$ and \emph{DADE}, the curves are plotted through varying $\epsilon_0$ from $0.5$ to $4.0$ and varying $P_s$ from $0.05$ to $0.6$. Compared with $ADSampling$, \emph{DADE} achieves better recall when the number of dimensions equals, which confirms the results from Section~\ref{sec:overall_performance}; (5) compared with \emph{PCA}, \emph{DADE} that adopts hypothesis testing to \emph{adaptively} determine the number of dimensions to approximate the distance, achieves better performance in terms of recall and QPS since different candidates may require different number of dimensions to check whether $dis<r$ with sufficient confidence. This observation also shows the effectiveness of our proposed hypothesis testing method. 

\subsubsection{\textbf{Sensitivity Study.}} 
We study two hyper-parameters in \emph{DADE}, i.e., the significance level $P_s$ in hypothesis testing and the step size $\Delta_d$ in the dimension expansion, where these factors control the trade-off between accuracy and QPS. In the following section, for a given $P_s$ and $\Delta_d$, each curve is plotted through varying $N_{ef}$ and $N_{probe}$ for \emph{HNSW**} and \emph{IVF**}, respectively. 

\noindent \textbf{\underline{The effect of $P_s$.}} We conduct the experiment by varying $P_s$ from $0.05$ to $0.3$ with step size $0.05$. It should be noted that the smaller $P_s$ is, the more accurate the estimated distance is. The results are shown in Figure~\ref{fig:sensitivity_p}. We can observe that the curve moves to the upper right with the increase $P_s$ at the beginning. With the further increase such as from $P_s=0.25$ to $P_s=0.3$ in the GIST dataset with $K=100$, the curves move in an opposite way, which shows a trade-off between the estimated accuracy and efficiency. Specifically, the greater $P_s$ is, the higher probability $H_0$ can be rejected (see Section~\ref{sec:hypothesis}). Thus, \emph{DADE} will exit the loop earlier with a lower number of dimensions, which will improve the efficiency. However, the increases of $P_s$ will cause many more failures (i.e., $H_0$ is rejected with $dis<r$). Therefore, when $P_s$ excels at a specific threshold, the increase in failure will degrade the performance of AKNN search algorithms. 

\noindent \textbf{\underline{The effect of $\Delta_d$.}} We conduct the experiment by varying $\Delta_d$ from $1$ to $64$ with uneven intervals. The results are shown in Figure~\ref{fig:sensitivity_deltad}. From this figure, we have the following observations. (1) With the increase of $\Delta_d$, the curves move from the bottom left to the upper right first. Then, the curves move to the opposite way. This is because when $\Delta_d$ is small such $1$, \emph{DADE} will cost time to increase the number of dimensions to have sufficient confidence to check whether $H_0$ will be rejected. Although smaller $\Delta_d$ has the ability to quit the DCO procedure with lower dimensions, hypothesis testing may be time-consuming. When $\Delta_d$ is large enough such as $32$, \emph{DADE} can increase the number of dimensions with fewer loops, which decreases the times of hypothesis testing and achieves the best trade-off. (2) The search through \emph{Linear Scan} and \emph{HNSW**} (or \emph{IVF**}) shows different preferences. For example, when \emph{Linear Scan} with \emph{DADE} is adopted, $\Delta_d=1$ achieves the best performance (see Figure~\ref{fig:feasibility}). This is because \emph{Linear Scan} treats all data objects as the candidates, in which most of them can be eliminated with high probability when the number of dimensions is lower since the distance between the object and query is far away. 


\section{Related Works}
\noindent \textbf{Approximate K Nearest Neighbor Search.}
{\color{black}{Existing approaches for AKNN search can be roughly divided into (1) graph-based~\cite{zhao2023towards,malkov2018efficient,fu12fast,azizi2023elpis,peng2023efficient}, 
(2) quantization-based~\cite{jegou2010product,ge2013optimized,babenko2014inverted,gao2024rabitq}, (3) tree-based~\cite{beygelzimer2006cover,ciaccia1997m,dasgupta2008random,muja2014scalable} and (4) hashing-based~\cite{datar2004locality,gan2012locality,zheng2020pm,tian2023db}.}}
Different categories of these methods show different advantages. For example, graph-based methods usually achieve the best performance for both in-memory~\cite{zhao2023towards,lu2021hvs} and disk-resident situations~\cite{wang2024starling,jayaram2019diskann,chen2021spann}. Quantization-based methods outperform the others in terms of memory consumption. Hash-based methods provide a theoretical guarantee in terms of the quality of the searched objects. 
Beyond these methods, there are various studies that apply machine learning techniques to AKNN search~\cite{gupta2022bliss,rajput2024recommender,lu2024knowledge,zheng2023learned}. For example, BLISS~\cite{gupta2022bliss} adopts multilayer perceptron to predict the bucket id of each data object and performs repartition to make the objects in each bucket more compact in terms of KNN. 
{\color{black}{Learning to hashing~\cite{wang2017survey} adopts the metric learning technique to group the similar data objects into the same bucket.}}
Zheng et.al~\cite{zheng2023learned} propose to learn the number of buckets to be scanned for each query in \emph{IVF} index to reduce the number of computations in total since different queries require different numbers of buckets to achieve the same accuracy. 
Although these approaches provide different ways to generate the candidates, the methods for DCOs are orthogonal to these approaches, which focus on finding KNNs among the generated candidates. 

\noindent \textbf{Distance Estimation.}
Random Projection is a well-known technique to approximate the Euclidean distance, which is widely used in LSH~\cite{zheng2020pm,tian2023db}. For example, Zheng et.al~\cite{zheng2020pm} propose PM-LSH to perform query-aware hashing for objects, in which the \emph{exact} Euclidean distance $dis$ can be estimated in the projected space with $dis'/\sqrt{m}$, where $m$ is the dimension of the projected space. 
However, it should be noted that the random projection used in LSH is for generating candidates rather than conducting DCOs. 
{\color{black}{Arora et al.~\cite{arora2018hd} propose HD-Index, where the upper-bound of the Euclidean distance is fast estimated through triangular and Ptolemaic inequality to refine candidates. Similarly, Li et al.~\cite{li2017fexipro} adopt transformation with PCA to obtain a tighter upper-bound when inner product distance is applied.}}
Recently, Gao et.al~\cite{gao2023high} propose $ADSampling$ that leverages the \emph{random orthogonal projection} for distance estimation in DCOs, i.e., $\sqrt{D/d}\cdot dis'$. According to our best knowledge, this is the only work that focuses on the speed-up of DCOs. However, the distance estimation in $ADSampling$ is data-oblivious, which hinders the accuracy of the distance estimation for a specific dataset. Moreover, the distance estimation in $ADSampling$ is \emph{unbiased} in terms of the random projection rather than the data distribution since the explicit expression of the data distribution is \emph{unknown}, which makes it hard to develop a theory to provide guarantee about the distance estimation. To bridge this gap, we propose a general formation of the \emph{unbiased} estimation in terms of data distribution and theoretically show an optimized approach for more accurate distance estimation.

\section{Conclusion}
This paper proposes a data-aware distance estimation, called \emph{DADE}, to speed up the process of DCOs, which aims to return KNNs from the candidates set. Specifically, \emph{DADE} first rotates the original space with an \emph{orthogonal transformation} $W$, where the variance of each dimension in the projected space is ordered from large to small. Then, it approximates the \emph{exact} distance in the space with lower dimensions, where the number of dimensions is \emph{adaptively} determined in the query phase through a hypothesis testing approach. Moreover, the probability that controls the significance level is defined and empirically approximated from the data objects in the hypothesis testing. We theoretically prove that the distance estimation in \emph{DADE} is \emph{unbiased} and \emph{optimized} in terms of data distribution if the transformation is \emph{orthogonal}. We conduct extensive experiments on widely used benchmark datasets compared with conventional and SOTA methods for DCOs by combining different index structures. The results demonstrate that our proposed method \emph{DADE} can outperform existing DCO methods to achieve a better trade-off between accuracy and latency. 

\begin{acks}
This work is partially supported by NSFC (No. 62472068), Shenzhen Municipal Science and Technology R\&D Funding Basic Research Program (JCYJ20210324133607021), and Municipal Government of Quzhou under Grant (No. 2023D044), and Key Laboratory of Data Intelligence and Cognitive Computing, Longhua District, Shenzhen.
\end{acks}

\balance
\bibliographystyle{ACM-Reference-Format}
\bibliography{sample}

\end{document}